\theoremstyle{plain}
\newtheorem{theorem}{Theorem}
\theoremstyle{definition}
\newtheorem{corollary}[theorem]{Corollary}
\newtheorem{remark}{Remark}
\begin{document}
\title{Construction of integrable generalised travelling wave models and analytical solutions using Lie symmetries}
\author{Johannes Borgqvist$^{*\diamond}$, Fredrik Ohlsson$^{**}$, Xingjian Zhou$^{*}$, Ruth E. Baker$^{*}$}
\date{}
\maketitle
\begin{itemize}
\item[$*$] Wolfson Centre for Mathematical Biology, Mathematical Institute, University of Oxford,
\item[$**$] Department of Mathematics and Mathematical Statistics, Ume\aa$\,$University.
\item[$\diamond$] Corresponding author, e-mail: \url{johannes.borgqvist@gmail.com}.  
\end{itemize}
\rule{\textwidth}{1pt}

\begin{abstract}
Certain solutions of autonomous PDEs without any boundary conditions describing the spatiotemporal evolution of a dependent variable in an unbounded spatial domain can be characterised as a travelling wave moving with constant speed. In the simplest case, such PDEs can be reduced to a single autonomous second order ODE with one dependent variable. For certain parameter values it has been shown using perturbations methods in combination with ansätze that numerous such second order ODEs have analytical travelling wave solutions described by a simple sigmoid function. However, this methodology provides no leverage on the problem of finding a generalised class of models possessing such analytical travelling wave solutions. The most efficient methods for both finding analytical solutions and constructing classes of ODEs are based on Lie symmetries which are transformations known as one parameter $\mathcal{C}^{\infty}$ diffeomorphisms mapping solutions to other solutions. Recently, analytical solutions of a second order ODE encapsulating numerous oscillatory models as well as some of the previously mentioned travelling wave models with simple analytical solutions have been found by means of a two dimensional Lie algebra. Based on this Lie algebra, we construct the most general class of integrable autonomous second order ODEs for which these symmetries are manifest. Moreover, we show that a sub-class of second order ODEs has simple analytical travelling wave solutions described by a sigmoid function. Lastly, we characterise the action of the two symmetries in this Lie algebra on these simple analytical travelling wave solutions and we relate our sub-class of ODEs to previously known integrable travelling wave models.
\end{abstract}
\rule{\textwidth}{1pt}

\section{Introduction}
Many physical processes are described by an autonomous \textit{partial differential equation} (PDE) describing the spatiotemporal evolution of a dependent variable $u(x,t)$ depending on the two dependent variables $x$ and $t$ interpreted as space and time, respectively. A class of such PDEs referred to as reaction-diffusion-advection equations can be written on the form:
\begin{equation}
    \frac{\partial u}{\partial t}=\frac{\partial }{\partial x}\left(D(u)\frac{\partial u}{\partial x}\right)+\mu(u)\frac{\partial u}{\partial x}+g(u)\,,
    \label{eq:PDE_general}
\end{equation}
where the first term on the right hand side describes diffusion, the second term on the right hand side describes advection and the source term $g$ is referred to as the net-proliferation term. In the absence of boundary conditions, this PDE can be reduced to a second order \textit{ordinary differential equation} (ODE) by introducing a so called \textit{travelling wave variable} $z$ defined by $z=x-ct$ where the constant $c$ is referred to as the wave speed. The resulting ODE is given by
\begin{equation}
    0=\frac{\mathrm{d}}{\mathrm{d}z}\left(D(u)u_{z}\right)+(\mu(u)+c)u_{z}+g(u)\,,
    \label{eq:ODE_general}
\end{equation}
where $u(z)$ is the dependent variable, $z$ is the independent travelling wave variable and derivatives are denoted by a subscript, e.g. $\mathrm{d}u/\mathrm{d}z=u_z$. For the particular choices $D(u)=1$, $\mu(u)=0$ and $g(u)=u-u^{k}$ where $k$ is a constant, Kaliappan showed by means of perturbation expansions in combination with ans\"{a}tze that the resulting ODE in Eq. \eqref{eq:ODE_general} has analytical solutions given by~\cite{kaliappan1984exact}
\begin{equation}
    u(z)=\left(\frac{1}{1-\Upsilon\exp(mz)}\right)^{2/(k-1)}\,,
    \label{eq:kalliappan_solution}
\end{equation}
when the parameter constraints $c=(k+3)/\sqrt{2(k+1)}$, $m=(k-1)/\sqrt{2(k+1)}$ hold and where $\Upsilon$ is an arbitrary integration constant ~\cite{kaliappan1984exact}. When $k>1$ these analytical solutions are bounded and in this case they describe a travelling wave as they satisfy the boundary conditions $\underset{z\rightarrow\infty}{\lim}u(z)=0$ and $\underset{z\rightarrow-\infty}{\lim}u(z)=1$. In addition, the solutions are stable to all small finite domain disturbances and unstable to small disturbances in the far field~\cite{kaliappan1984exact}. A reason why these analytical solutions are interesting is due to the fact that the model analysed by Kaliappan in~\cite{kaliappan1984exact} is a generalisation of the Fisher-KPP model defined by $g(u)=u(1-u)$~\cite{fisher1937wave} which is used in numerous applications. Furthermore, an advantage with such simple analytical travelling wave solutions described by sigmoid functions as in Eq. \eqref{eq:kalliappan_solution}, is that they can readily be used to understand global properties of the overall dynamics of the system and what effect each parameter has on the dynamical behaviour. However, using perturbation methods it is difficult to determine when a second order ODE as in Eq. \eqref{eq:ODE_general} has simple analytical solutions such as the ones in Eq. \eqref{eq:kalliappan_solution}. 

Interestingly enough, using perturbation expansions it has been shown that another second order ODE also has simple analytical travelling wave solutions. In particular, the ODE in Eq. \eqref{eq:ODE_general} characterised by $D(u)=u^{-1}$, $\mu(u)=0$ and $g(u)=u(1-u)$ has analytical solutions given by $u(z)=(1+\Upsilon\exp(bz))^{-1}$ for an arbitrary integration constant $\Upsilon$ where $b=2(c+\sqrt{c^{2}+2})$~\cite{aronson1980density,newman1980some}. These results demonstrate that there are numerous second order ODEs having simple analytical travelling wave solutions given by sigmoid functions. Therefore, it is of interest to construct a class of second order ODEs characterised by analytical solutions with the same structure as in Eq. \eqref{eq:kalliappan_solution}, and this is a challenging problem to solve using perturbation methods. 

The most superior methods for finding analytical solutions and constructing classes of ODEs are based on \textit{Lie symmetries}, named after the Norwegian mathematician Sophus Lie. These transformations are so called (one parameter) $\mathcal{C}^{\infty}$ diffeomorphism which map a solution curve to another solution curve, or, equivalently formulated, transformations that leave the solution manifold invariant~\cite{bluman1989symmetries,hydon2000symmetry,olver2000applications,stephani1989differential}. This latter formulation is the key for constructing classes of models, since the implication of a symmetry leaving the solution manifold invariant is that any ODE for which certain symmetries are manifest can be written as a function of the so called \textit{differential invariants} of its symmetries. Thus, by calculating differential invariants of a set of symmetries one can construct the most general class of ODEs for which the symmetries of interest are manifest. To find analytical solutions using symmetries, one calculates coordinate transformations known as \textit{canonical coordinates} which transform the ODE of interest to an autonomous first order ODE which can be directly solved by means of integration, and this procedure is referred to as \textit{Lie's algorithm}~\cite{bluman1989symmetries}. In the particular case of a second order ODE, a set of two symmetries referred to as a two-dimensional \textit{Lie algebra} is required in order to find analytical solutions or first integrals, and such a Lie algebra yields two sets of canonical coordinates that can be used to carry out two successive step-wise integrations. A second order ODE with an associated two-dimensional Lie algebra is called \textit{integrable}, and for such ODEs one can calculate first integrals or analytical solutions. 

Previously, analytical solutions and first integrals of autonomous second order ODEs of Liénard-type have been found by means of a two dimensional Lie algebra based on a so called fibre-preserving symmetry~\cite{chandrasekar2006simple,feng2021symmetry,pandey2009lienard1,pandey2009lienard2}. Specifically, one such second order ODE encapsulates numerous oscillatory models, and in terms of the general ODE in Eq. \eqref{eq:ODE_general} it is defined by $D{(u)}=1$, $\mu(u)=k_{1}u^{q}+k_2$ and $g{(u)}=k_{3}u^{2q+1}+k_{4}u^{q+1}+\lambda_{1}u$~\cite{chandrasekar2006simple,feng2021symmetry} where $k_{1},k_{2},k_{3},k_{4},\lambda$ are arbitrary parameters and where $q$ is an arbitrary power. In particular, by choosing two of these parameters to $k_{1}=k_{4}=0$, we retrieve the same type of second order ODE corresponding to the previously mentioned generalised Fisher-KPP model analysed in~\cite{kaliappan1984exact} which has analytical travelling wave solutions given by Eq. \eqref{eq:kalliappan_solution}. In fact, it has been shown that the 
ubiquitous Fisher-KPP model is integrable under a two dimensional Lie algebra for the particular wave speed $c=5/\sqrt{6}$~\cite{güngör2021notes} which is exactly the wave speed for which it has an analytical solution~\cite{murray2002}. Accordingly, it should be possible to retrieve these analytical travelling wave solutions by means of Lie's algorithm based on the same type of two dimensional Lie algebra that underlie analytical solutions of the previously mentioned oscillatory models in~\cite{chandrasekar2006simple,feng2021symmetry,pandey2009lienard1,pandey2009lienard2}. Moreover, if such a two dimensional Lie algebra can be used to find analytical travelling wave solutions, it is of interest to quantify the action of these two symmetries on solution curves. 

In this work, we construct the most general second order ODE admitting the two-dimensional symmetry (sub-)algebra of certain classes of oscillatory second order ODEs in~\cite{chandrasekar2006simple} and density dependent diffusion models in~\cite{cherniha1997nonlineardiffusion}. Subsequently, we show that a sub-class of these models all have analytical solutions with the same structure as in Eq.~\eqref{eq:kalliappan_solution}. Finally, we describe the symmetry transformations generated by the vector fields in our Lie algebra and the action of these transformations on these analytical solutions.

\section{Preliminaries}
A Lie symmetry of a second order ODE $\Delta(z,u,u_z,u_{zz}) = 0$ in one dependent variable $u$ and one independent variable $z$ is a family of $\mathcal{C}^{\infty}$ diffeomorphisms $\Gamma_{\epsilon}$ parameterised by $\epsilon$ such that the transformations
\begin{equation}
\Gamma_{\epsilon}:(z,u)\mapsto(\hat{z}(\epsilon),\hat{u}(\epsilon))\,,
\label{eq:Gamma}
\end{equation}
map a solution curve $(z,u)$ to another solution curve $(\hat{z}(\epsilon),\hat{u}(\epsilon))$ and constitute a one parameter Lie group. Such a symmetry is completely characterised by its infinitesimal description in terms of the vector field
\begin{equation}
X=\xi{(z,u)}\partial_z+\eta{(z,u)}\partial_u\,,
\label{eq:X}
\end{equation}
known as the \textit{infinitesimal generator of the Lie group}~\cite{bluman1989symmetries,hydon2000symmetry,olver2000applications,stephani1989differential} where $\xi$ and $\eta$ are known as the \textit{infinitesimals}. The symmetries $\Gamma_{\epsilon}$ of the ODE $\Delta = 0$ can be found by solving the \textit{linearised symmetry condition}~\cite{bluman1989symmetries,hydon2000symmetry,olver2000applications,stephani1989differential}
\begin{equation}
X^{(2)}\left.(\Delta)\right|_{\Delta = 0}=0 \,.
\label{eq:lin_sym_general}
\end{equation}
In the linearised symmetry condition, the second prolongation of $X$ is given by
\begin{equation}
X^{(2)}=X+\eta^{(1)}(z,u,u_{z})\partial_{u_{z}}+\eta^{(2)}(z,u,u_{z},u_{zz})\partial_{u_{zz}}\,,
\label{eq:prolonged}
\end{equation}
with the \textit{prolonged infinitesimals} $\eta^{(1)}$ and $\eta^{(2)}$ given by~\cite{hydon2000symmetry}
  \begin{align}
    \eta^{(1)} &= \eta_z+(\eta_u-\xi_z)u_{z}-\xi_uu_{z}^2,\label{eq:eta_1}\\
    \eta^{(2)} &=\eta_{zz}+(2\eta_{zu}-\xi_{zz})u_{z}+\left(\eta_{uu}-2\xi_{zu}\right)u_{z}^2-\xi_{uu}u_{z}^3+\left(\eta_u - 2\xi_z -3\xi_u u_{z}\right)u_{zz}.\label{eq:eta_2}  \end{align}

\subsection{Canonical coordinates and differential invariants}
The \textit{canonical coordinates} of a symmetry is a set of coordinates $(s,r)=(s{(z,u)},r{(z,u)})$ in which the action of the symmetry is a translation in the independent coordinate~\cite{bluman1989symmetries,hydon2000symmetry,olver2000applications,stephani1989differential}, i.e.
\begin{equation}
Xs=\xi\frac{\partial s}{\partial z}+\eta\frac{\partial s}{\partial u}=1\,,\quad X r=\xi\frac{\partial r}{\partial z}+\eta\frac{\partial r}{\partial u}=0\,.\\
\label{eq:canonical}
\end{equation}
Following Lie's original construction, the canonical coordinates can be used to reduce the original equation by quadrature.


A further characterisation of the symmetry $\Gamma_{\epsilon}$ is obtained by constructing a complete set of  \textit{differential invariants}, i.e., non-constant functions $I=I(z,u,u_{z},u_{zz})$ satisfying
\begin{equation}
    X^{(2)}I=\xi\frac{\partial I}{\partial z}+\eta\frac{\partial I}{\partial u}+\eta^{(1)}\frac{\partial I}{\partial u_{z}}+\eta^{(2)}\frac{\partial I}{\partial u_{zz}}=0\,.
    \label{eq:differential_invariant}
\end{equation}
In the general case of single second order ODEs considered here, the one parameter group generated by $X$ has one zeroth order invariant corresponding to the canonical coordinate $r(z,u)$ in Eq.~\eqref{eq:canonical}, one first order invariant $v(z,u,u_{z})$, and one second order invariant $w(z,u,u_{z},u_{zz})$. The fact that the space of solutions is invariant under the action of the symmetry implies that the most general class of second order ODEs for which a symmetry generated by $X$ in Eq. \eqref{eq:X} is manifest can be written as a function of its invariants as
 \begin{equation}
    w(z,u,u_{z},u_{zz})=F\left(r(z,u),v(z,u,u_{z})\right)\,,
     \label{eq:general_ODE}
 \end{equation}
 where $F\in\mathcal{C}\left(\mathbb{R}^{2}\right)$ is an arbitrary function~\cite{hydon2000symmetry,olver2000applications}.  

 \subsection{Lie algebras}
 A finite-dimensional set of infinitesimals $\mathcal{L}=\mathrm{Span}\{X_{1},\ldots,X_{n}\}$ for some finite $n\in\mathbb{N}_{+}$ constitutes a vector space. The bilinear map referred to as the \textit{Lie bracket} is defined by
 \begin{equation}
   [X_{i},X_{j}] = X_{i}X_{j}-X_{j}X_{i}\,,\quad i,j\in\{1,\ldots,n\}\,.
   \label{eq:Lie_bracket}
 \end{equation}
 If $\mathcal{L}$ is closed under the action of the Lie bracket it is referred to as a \textit{Lie algebra}, and in this case it follows that
 \begin{equation}
   [X_{i},X_{j}] = \sum_{k=1}^{n}c_{k}X_{k}\,,\quad i,j\in\{1,\ldots,n\}\,.
   \label{eq:Lie_bracket_2}
 \end{equation}
 A \textit{sub-algebra} is a sub-set of $\mathcal{L}$ that is closed under the action of the Lie bracket. 

 \subsection{General class of autonomous models resulting from a fibre preserving symmetry}\label{ssec:fibre}
 A \textit{fibre preserving} symmetry is a symmetry where the changes in the independent variable do not depend on the dependent variable~\cite{güngör2021notes}. Such symmetries are given by 
 \begin{equation}
    X_{G}=\xi(z)\partial_z + f(z)u\partial_{u}\,,
     \label{eq:fibre_preserving}
 \end{equation}
 and the general classes of second order ODE for which these symmetries are manifest are given by~\cite{güngör2021notes}
 \begin{equation}
    u_{zz}+\xi^{-1}(\xi_{z}-2f)u_{x}+\xi^{-2}(f^{2}-\xi f_{z})u=\frac{H(r,v)}{\nu\xi^{2}}\,,\quad\xi\neq 0\,,
     \label{eq:class_fibre}
 \end{equation}
 where $H$ is an arbitrary function and where~\cite{güngör2021notes}
 \begin{equation}
    \nu=\exp\left(\int\frac{f}{\xi}\right)\,.
     \label{eq:nu}
 \end{equation}
Several classes of second order ODEs that are invariant under the action of fibre preserving symmetries have been constructed by G{\"u}ng{\"o}r in~\cite{güngör2021notes}. In the specific case of a two parameter symmetry group extended by the translation generator 
 \begin{equation}
  X_{z}=\partial_{z}\,,
  \label{eq:translation}
\end{equation}
 the commutation relation
 \begin{equation}
[X_{z},X_{G}]=pX_{G}\,,
\end{equation}
must hold in order for $\mathcal{L}=\mathrm{Span}\{X_{z},X_{G}\}$ to be a Lie algebra. This leads to an autonomous class of second order ODEs in Eq. \eqref{eq:class_fibre} and in this case, the functions $\xi(z)$ and $f(z)$ in Eq. \eqref{eq:fibre_preserving} must satisfy $\xi(z)=-K\exp(pz)$ and $f(z)=\exp(pz)$, respectively. Thus, the resulting generator in this case is given by 
\begin{equation}
X_{G}=-K\exp\left(pz\right)\partial_{z}+\exp\left(pz\right)u\partial_{u}\,,
\label{eq:generalised_generator}
\end{equation}
defined by the two arbitrary parameters $K,p$. In this case, $\mathcal{L}=\mathrm{Span}\{X_{G}\}$ is an \textit{ideal} of $\mathcal{L}$, and $\{X_{z},X_{G}\}$ is its canonical basis. Moreover, this generator $X_{G}$ in Eq. \eqref{eq:generalised_generator} encapsulates the generating vector fields of the symmetries of the general Liénard equation\footnote{The form of the generator $X_G = -K\exp\left(pz\right)\partial_{z}+\exp\left(pz\right)\left(u+\lambda\right)\partial_{u}$ sometimes appearing in~\cite{pandey2009lienard1} is recovered by shifting the state $u$ by a constant to $u+\lambda$.}\cite{pandey2009lienard1}.
\section{Results}
We present three main results. First, we present of a class of autonomous integrable second order ODEs based on a Lie algebra known to underlie analytical solutions of several travelling wave and oscillatory models considered in~\cite{chandrasekar2006simple,feng2021symmetry,güngör2021notes,pandey2009lienard1,pandey2009lienard2}. Second, we derive a sub-class of second order ODEs which has simple analytical travelling wave solutions on the same form as in Eq. \eqref{eq:analytical_solution}. Also, we demonstrate that the generalised Fisher--KPP model which Kaliappan derived analytical solutions of in~\cite{kaliappan1984exact} is indeed a member of this sub-class of second order ODEs. Third, we generate the two symmetries in our Lie algebra in order to quantify their action on the simple analytical travelling wave solution curves characterising our sub-class of second order ODEs. 
\subsection{Construction of integrable second order ODEs using differential invariants}
We wish to construct a class of second order ODEs with simple analytical travelling wave solutions. Moreover, it is known that the generalisation of the Fisher--KPP model studied by Kaliappan in~\cite{kaliappan1984exact} has such analytical solutions which are given by Eq. \eqref{eq:kalliappan_solution}. Previously, analytical solutions and first integrals of a second order ODE encompassing numerous oscillatory models~\cite{chandrasekar2006simple,feng2021symmetry} have been found using canonical coordinates of a two-dimensional Lie algebra, and importantly Kaliappans generalisation is also encapsulated by this same second order ODE. Consequently, it is expected that the analytical solutions in Eq. \eqref{eq:kalliappan_solution} can be obtained by means of Lie's algorithm using canonical coordinates derived from the two-dimensional Lie algebra of the oscillatory models in~\cite{chandrasekar2006simple,feng2021symmetry}.

To this end, as a first step towards constructing a class of second order ODEs with simple analytical travelling wave solutions, we are interested in constructing an integrable class of autonomous second order ODEs based on the symmetry algebra of the general oscillatory models in~\cite{chandrasekar2006simple}. This is achieved using Gung\"{o}r's class of second order ODEs~\cite{güngör2021notes} based on the fibre preserving vector field $X_{G}$ in Eq. \eqref{eq:generalised_generator} in combination with the translation generator $X_{z}$ in Eq. \eqref{eq:translation} (see Section \ref{ssec:fibre} of Methods). The most general class of autonomous second order ODEs that is invariant under the action of the Lie algebra $\mathcal{L}=\mathrm{Span}\{X_{z},X_{G}\}$ is given by 
\begin{equation} u_{zz}+\frac{(2+Kp)}{K}u_{z}+\frac{(1+Kp)}{K^{2}}u=u^{(2Kp+1)}F_{zG}\left(\frac{1}{u^{Kp}}\left(\frac{u_{z}}{u}+\frac{1}{K}\right)\right)\,,
    \label{eq:general_class_autonomous}
  \end{equation}  
  where $F_{zG}\in\mathcal{C}(\mathbb{R})$ is an arbitrary function. The details behind these calculations are presented in Section 1 of the supplementary material.

We note that the quantity
\begin{equation}
    r_{1}(z,u,u_{z})=\frac{1}{u^{Kp}}\left(\frac{u_{z}}{u}+\frac{1}{K}\right)\,,
      \label{eq:r_1}
  \end{equation}
is a first order differential invariant of the generating vector field $X_G$ in Eq. \eqref{eq:generalised_generator}. Also, the second order ODE encompassing numerous oscillatory models in~\cite{chandrasekar2006simple} is recovered by the choice $F_{zG}(r_{1})=k_{1}+k_{2}r_{1}$ in Eq.~\eqref{eq:general_class_autonomous}. In addition, the class of autonomous models in Eq. \eqref{eq:general_class_autonomous} is integrable by virtue of their symmetry under a two-dimensional Lie algebra, and can consequently be integrated using quadrature following symmetry reduction using the algebra $\mathcal{L}$.

\subsection{A sub-class of second order ODEs with analytical solutions}
Analytical solutions and first integrals of the general class of second order ODEs in Eq. \eqref{eq:general_class_autonomous} can be found by means of step-wise integration based on two sets of canonical coordinates. Starting with the sub-algebra $\mathcal{L}_{G}=\mathrm{Span}\{X_{G}\}$, the corresponding canonical coordinates are given by
\begin{equation}
    s_{2}{(z)}=\frac{1}{Kp}\exp(-pz)\,,\quad r_{2}{\left(z,u\right)}=u\exp\left(\frac{1}{K}z\right)\,.
    \label{eq:canonical_coordinates_2}
\end{equation}
Here, the sub-script 2 indicates that these canonical coordinates are used in the second and last step-wise integration. Importantly, the first order invariant $r_1$ in Eq. \eqref{eq:r_1} can be formulated in terms of the canonical coordinates $(s_{2},r_{2})$ as follows
  \begin{equation} 
  r_{1}=-\left(\frac{1}{Kr_{2}^{Kp+1}}\right)\frac{\mathrm{d}r_{2}}{\mathrm{d}s_{2}}\,.
    \label{eq:r_1_canonical}
  \end{equation}
The next set of canonical coordinates is derived by applying $X_{z}$ to the differential invariants $(r_{2},r_{1})$. In this case, the corresponding canonical coordinates are given by
\begin{equation}
  s_{1}(r_{2},r_{1})=K\ln(r_{2})\,,\quad r_{1}(r_{2},r_{1})=r_{1}\,,
  \label{eq:canonical_full_article}
\end{equation}  
where $r_{1}$ is the first order differential invariant in Eq. \eqref{eq:r_1}. Importantly, in terms of the canonical coordinates $(s_{1},r_{1})$, the class of second order ODEs in Eq. \eqref{eq:general_class_autonomous} can be formulated as the following autonomous first order ODE

\begin{equation}
K r_{1}\frac{\mathrm{d}r_{1}}{\mathrm{d}s_{1}}=F_{zG}(r_{1})-(Kp+1)r_{1}^{2}\,.
  \label{eq:ODE_general_class}
\end{equation}
Consequently, analytical solutions and first integrals of the class of second order ODEs in Eq. \eqref{eq:general_class_autonomous} are obtained by means of two successive step-wise integrations. First, the first order ODE in Eq. \eqref{eq:ODE_general_class} is solved for $r_{1}(s_{1})$. Second, the resulting solution for $r_{1}(s_{1})$ yields a first order autonomous ODE for $r_{2}$ by re-writing $s_{1}$ and $r_{1}$ in terms of $r_{2}$ using Eqs. \eqref{eq:canonical_full_article} and \eqref{eq:r_1_canonical}, respectively. In particular, analytical travelling wave solutions of the same form as in Eq. \eqref{eq:kalliappan_solution} correspond to solutions of Eq. \eqref{eq:ODE_general_class} where the first order differential invariant $r_{1}$ in Eq. \eqref{eq:r_1} is \textit{constant}, i.e. $r_{1}=\omega$ for some constant $\omega\neq 0$. The details behind the calculations of these canonical coordinates are presented in Section 2 of the supplementary material. 

Using these sets of canonical coordinates, we derive a sub-class of models sharing simple analytical solutions (Theorem \ref{thm:analytical_solution}).

\begin{theorem}[\textit{A sub-class of second order ODEs with simple analytical solutions}]
The sub-class of models defined by functions $F_{zG}$ in Eq. \eqref{eq:general_class_autonomous} such that the equation:
\begin{equation}
0=F_{zG}(\omega)-(Kp+1)\omega^{2}\,,
  \label{eq:param_class}
\end{equation}
has at least one non-zero solution $\omega\in\mathbb{R}:\omega\neq 0$ has analytical solutions given by:
\begin{equation}
    u(z)=\left(\frac{1}{(K\omega)^{1/Kp}}\right)\left(\frac{1}{\left(1+\Upsilon\exp(pz)\right)^{1/Kp}}\right)\,,
        \label{eq:analytical_solution}
    \end{equation}
where $\Upsilon$ is an arbitrary integration constant.
        \label{thm:analytical_solution}
\end{theorem}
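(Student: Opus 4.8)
The plan is to run Lie's algorithm in reverse, starting from the reduced first-order ODE \eqref{eq:ODE_general_class}, and to exploit the observation that the hypothesis \eqref{eq:param_class} is \emph{exactly} the statement that $r_1\equiv\omega$ is an equilibrium solution of \eqref{eq:ODE_general_class}. Since the two successive canonical-coordinate changes that carry \eqref{eq:general_class_autonomous} into \eqref{eq:ODE_general_class} are invertible, any solution of \eqref{eq:ODE_general_class} lifts to a solution of the original second-order ODE; so the constant-invariant branch $r_1=\omega$ produces a one-parameter family of solutions of \eqref{eq:general_class_autonomous}, and the task reduces to showing that this family is precisely \eqref{eq:analytical_solution}.

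Concretely, I would proceed in four steps. First, substitute the constant $r_1=\omega$ into \eqref{eq:ODE_general_class}: the left-hand side vanishes because $\mathrm{d}r_1/\mathrm{d}s_1=0$, and the right-hand side vanishes by \eqref{eq:param_class}, so this is a consistent equilibrium solution of the reduced equation. Second, feed $r_1=\omega$ into the relation \eqref{eq:r_1_canonical}, which yields the separable first-order ODE $\mathrm{d}r_2/\mathrm{d}s_2=-K\omega\,r_2^{\,Kp+1}$ for $r_2$ as a function of $s_2$. Third, integrate by quadrature: since $\int r_2^{-(Kp+1)}\,\mathrm{d}r_2 = r_2^{-Kp}/(-Kp)$, the solution is $r_2^{-Kp}=K^2 p\,\omega\, s_2 + C$ for an integration constant $C$. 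Fourth, revert to the original variables via \eqref{eq:canonical_coordinates_2}: using $r_2^{-Kp}=u^{-Kp}e^{-pz}$ and $K^2 p\,\omega\, s_2 = K\omega\, e^{-pz}$, the relation becomes $u^{-Kp}=K\omega + C e^{pz}$; solving for $u$ and relabelling $\Upsilon := C/(K\omega)$ gives exactly \eqref{eq:analytical_solution}.

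The computation is entirely elementary, so the difficulties are bookkeeping rather than conceptual. I would be careful to absorb all constants generated in the two integrations into the single arbitrary constant $\Upsilon$, and to record the standing assumptions $Kp\neq 0$ and $\omega\neq 0$ under which the fractional powers $(\cdot)^{1/Kp}$ in \eqref{eq:analytical_solution} are read on a fixed branch; note that $Kp\neq 0$ is in any case already required for the invariant \eqref{eq:r_1} and the canonical coordinates \eqref{eq:canonical_coordinates_2} to be well defined, and $\omega\neq 0$ is part of the hypothesis. As an independent check I would substitute \eqref{eq:analytical_solution} directly into \eqref{eq:r_1} and verify that $r_1$ evaluated on this $u$ equals the constant $\omega$ for every $\Upsilon$; by the structure of \eqref{eq:general_class_autonomous} this confirms that \eqref{eq:analytical_solution} solves the original ODE for any admissible $F_{zG}$ satisfying \eqref{eq:param_class}, and simultaneously pins down the relation between $\Upsilon$ and the model parameters. (No claim is made that these are \emph{all} solutions; non-constant $r_1$ would require integrating \eqref{eq:ODE_general_class} for the given $F_{zG}$, which need not be elementary, and lies outside the statement of the theorem.)
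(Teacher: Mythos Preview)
Your proposal is correct and follows essentially the same route as the paper's proof: identify $r_1\equiv\omega$ as an equilibrium of the reduced ODE \eqref{eq:ODE_general_class}, feed this into \eqref{eq:r_1_canonical} to obtain the separable equation $\mathrm{d}r_2/\mathrm{d}s_2=-K\omega\,r_2^{Kp+1}$, integrate, and revert via \eqref{eq:canonical_coordinates_2}. Your extra verification step (checking directly that \eqref{eq:analytical_solution} gives $r_1\equiv\omega$) and the explicit recording of the standing assumptions $Kp\neq 0$, $\omega\neq 0$ are nice touches not spelled out in the paper.
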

\begin{proof}
    See Section 4 of the supplementary material.
\end{proof}
\begin{remark}
In the case when $\omega=0$ is a solution of Eq. \eqref{eq:param_class}, analytical solutions are given by $u(z)=\Upsilon\exp\left(-(1/K)z\right)$.
\end{remark}
To ensure that the analytical solutions in Eq. \eqref{eq:analytical_solution} characterise a travelling wave, we impose parameter conditions (Corollary \ref{thm:travelling_wave}).
\begin{corollary}[\textit{Positivity of the parameters ensures travelling waves}]
The analytical solutions characterising the sub-class of models in Theorem \ref{thm:analytical_solution} are described by a travelling wave if the parameters $K,p$ as well as the root $\omega$ are positive. Given the positivity of these parameters, the analytical solutions in Eq. \eqref{eq:analytical_solution} satisfy the following limits
\begin{equation}
    \underset{z\rightarrow-\infty}{\lim}u(z)=\frac{1}{(K\omega)^{1/Kp}}\,,\quad\underset{z\rightarrow\infty}{\lim}u(z)=0\,.
    \label{eq:limit_TW}
\end{equation}
\label{thm:travelling_wave}
\end{corollary}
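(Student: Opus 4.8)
The plan is to argue directly from the closed-form solution \eqref{eq:analytical_solution} in Theorem \ref{thm:analytical_solution}, reducing the claim to two elementary one-sided limits of the building block $1+\Upsilon\exp(pz)$. Write the solution as $u(z)=C\,\bigl(1+\Upsilon\exp(pz)\bigr)^{-1/Kp}$ with constant prefactor $C=(K\omega)^{-1/Kp}$. The first preliminary step is to check that every expression occurring here is well defined as a real number on all of $\mathbb{R}$: since $K>0$ and $\omega>0$ we have $K\omega>0$, so the non-integer power $(K\omega)^{1/Kp}$ is an unambiguous positive real and $C>0$; and, taking the integration constant in the admissible range $\Upsilon>0$ (equivalently, restricting to the branch on which the base stays positive), we have $1+\Upsilon\exp(pz)>1$ for all $z$, so $u$ is smooth, positive and bounded above by $C$ on the whole line.

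For the limit as $z\to-\infty$: because $p>0$ we have $\exp(pz)\to 0$, hence $1+\Upsilon\exp(pz)\to 1$; by continuity of $x\mapsto x^{-1/Kp}$ at $x=1$ this yields $u(z)\to C=(K\omega)^{-1/Kp}$, the first limit in \eqref{eq:limit_TW}. For the limit as $z\to+\infty$: again using $p>0$, $\exp(pz)\to\infty$, so $1+\Upsilon\exp(pz)\to\infty$; since the exponent $1/Kp$ is strictly positive the denominator $(1+\Upsilon\exp(pz))^{1/Kp}\to\infty$, and therefore $u(z)\to 0$, the second limit in \eqref{eq:limit_TW}.

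Finally, to justify the description ``travelling wave'' I would record monotonicity: differentiating gives $u_z=-\tfrac{C\Upsilon}{K}\exp(pz)\,\bigl(1+\Upsilon\exp(pz)\bigr)^{-1/Kp-1}$, which is strictly negative for all $z$ under the positivity hypotheses, so $u$ is a strictly decreasing bounded profile interpolating between the constant states $C$ and $0$ --- a monotone front of sigmoidal shape, with the same structure as the Kaliappan solution \eqref{eq:kalliappan_solution}. There is no substantial obstacle here; the only point requiring care is the admissibility of the integration constant and the well-definedness of the fractional powers, which is precisely where positivity of $K$, $p$ and $\omega$ enters, so I would state the restriction $\Upsilon>0$ explicitly at the outset rather than leave it implicit.
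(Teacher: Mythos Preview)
Your argument is correct and is exactly the direct verification one would expect; the paper itself gives no explicit proof of this corollary, treating the limits in \eqref{eq:limit_TW} as immediate from the closed form \eqref{eq:analytical_solution}. Your explicit handling of the admissibility constraint $\Upsilon>0$ (needed so that the base $1+\Upsilon\exp(pz)$ stays positive and the fractional power is real-valued) and the monotonicity check are useful additions that the paper leaves implicit.
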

\begin{remark}
The quantity $(K\omega)^{-1/Kp}$ corresponds to the carrying capacity of the travelling wave solution.
\end{remark}
Given Theorem \ref{thm:analytical_solution} and Corollary \ref{thm:travelling_wave}, we can now readily visualise our class of models giving rise to analytical travelling wave solutions. By considering Eq. \eqref{eq:param_class} under the assumptions that $K,p,\omega$ are all positive, we see that the class of models giving rise to analytical travelling wave solutions are determined by functions $F_{zG}(\omega)$ that intersect the monomial $(Kp+1)\omega^{2}$ for some positive $\omega\in\mathbb{R}_+$. For instance, this implies that all continuous functions $F_{zG}$ that are defined on the whole of $\mathbb{R}$ that satisfy $F_{zG}(\omega=0)>0$ has analytical travelling wave solutions given by Eq. \eqref{eq:analytical_solution}. The counter intuitive conclusion from this result is that numerous seemingly complicated second order ODEs in fact admits simple analytical travelling wave solutions given by Eq. \eqref{eq:analytical_solution}.

Next, we demonstrate that the previously mentioned generalisation of the Fisher-KPP model studied by Kaliappan~\cite{kaliappan1984exact} is indeed encapsulated by this class of models.

\subsubsection{Analytical solutions of a generalisation of the Fisher-KPP model}
By choosing $F_{zG}(r_{1})=a$ in the general class of second order ODEs in Eq. \eqref{eq:general_class_autonomous} where $a\neq 0$ is a constant, we obtain the following second order ODE
\begin{equation}
u_{zz}+\frac{(2+Kp)}{K}u_{z}+\frac{(1+Kp)}{K^{2}}u
-a\,u^{(2Kp+1)}=0\,,
    \label{eq:RD_model}
\end{equation}
where we assume that $K,p>0$ in accordance with Corollary \ref{thm:travelling_wave}. This second order ODE encapsulates the generalisation of the ubiquitous Fisher-KPP model studied by Kaliappan having analytical solutions given by Eq. \eqref{eq:kalliappan_solution}. Moreover, in order for Eq. \eqref{eq:param_class} to have solutions $\omega$ whenever $K,p>0$ we impose that $a>0$, and in this case these roots are given by 
\begin{equation}
    \omega=\pm\sqrt{\frac{a}{Kp+1}}\,,
    \label{eq:roots_RD}
\end{equation}
which, in turn, yield analytical solutions given by Eq. \eqref{eq:analytical_solution}. In particular, travelling wave analytical solutions are given by the positive root in Eq. \eqref{eq:roots_RD} according to Corollary \ref{thm:travelling_wave}. 

As an example, we substitute the parameter values $p=1/\sqrt{6}$, $K=\sqrt{3/2}$, $Kp=1/2$ and $a=1$ in Eq. \eqref{eq:RD_model} which results in the Fisher-KPP model~\cite{fisher1937wave} characterised by the net proliferation term $g(u)=u(1-u)$ with the specific wave speed $c=5/\sqrt{6}$. In this case, the positive root in Eq. \eqref{eq:roots_RD} is given by $\omega=\sqrt{2/3}=1/K$ and the carrying capacity is given by $(K\omega)^{-1/Kp}=1$ according to Eq. \eqref{eq:limit_TW}. This is exactly the known analytical solution of the Fisher-KPP model~\cite{murray2002} for the wave speed $c=5/\sqrt{6}$.

In total, this work demonstrates that the analytical travelling wave solutions in Eq. \eqref{eq:analytical_solution}, can in fact be derived from the Lie algebra $\mathcal{L}=\mathrm{Span}\{X_{z},X_{G}\}$.  Given that Lie symmetries underlie these analytical solutions, it is of interest to understand how transformations by these two symmetries affect the curves described by the analytical solutions. To this end, we quantify the actions of the symmetries generated by the vector fields in the Lie algebra $\mathcal{L}=\mathrm{Span}\{X_{z},X_{G}\}$ on these solution curves. 
\subsection{Quantifying the action of symmetries on analytical solutions}
 The vector field $X_{z}$ in Eq. \eqref{eq:translation} generates the symmetry $\Gamma_{\epsilon}^{z}:(z,u)\mapsto(z+\epsilon,u)$ corresponding to translations in $z$. Moreover, the vector field $X_{G}$ in Eq. \eqref{eq:generalised_generator} generates the symmetry $\Gamma_{\epsilon}^{G}$ given by
\begin{equation}
\Gamma_{\epsilon}^{G}:(z,u)\mapsto\left(\frac{1}{p}\ln\left(\frac{\exp(pz)}{1+pK\epsilon\exp(pz)}\right),u(1+pK\epsilon\exp(pz))^{1/pK}\right)\,.
  \label{eq:generalised_symmetry}
\end{equation}
For the analytical solutions in Eq. \eqref{eq:analytical_solution}, we quantify the action of these symmetries in terms of the arbitrary parameter $\Upsilon$. The action of the translation symmetry corresponds to
\begin{equation}
  \Gamma_{\epsilon}^{z}:\Upsilon\mapsto\Upsilon\exp\left(-p\epsilon\right)\,,
  \label{eq:Gamma_z_param}
\end{equation}
and, similarly, the action of the generalised symmetry is given by
\begin{equation}
  \Gamma_{\epsilon}^{G}:\Upsilon\mapsto\left(\Upsilon-pK\epsilon\right)\,.
  \label{eq:Gamma_G_param}
\end{equation}
These results imply that for both these symmetries both the carrying capacity in Eq. \eqref{eq:limit_TW} as well as the power of the exponential term given by $1/Kp$ in the solution curves in Eq. \eqref{eq:analytical_solution} are invariant under transformations by these symmetries. The details behind these calculations are found in Section 5 of the supplementary material.
\section{Discussion}
In this work, we construct the most general class of integrable second order ODEs based on their symmetry under a generalised Lie algebra common to previously considered travelling wave models~\cite{chandrasekar2006simple,cherniha1997nonlineardiffusion,feng2021symmetry}.
We show that a subset of these models admit analytical travelling wave solutions of a simple form, previously known to exist for distinct travelling wave models~\cite{aronson1980density,newman1980some,kaliappan1984exact}, and derive the condition on the function $F_{zG}$ defining this subset. The analytical solutions are found by means of integration of the original equation reduced with respect to its symmetries and, more specifically, as solutions invariant under the action of the generalised symmetry $X_G$. The general class of models include several of the previously mentioned travelling wave models such as the Fisher-KPP model as special cases, but constitutes a much larger set of second order ODEs with fewer restrictions on the dynamics. The striking and counter intuitive conclusion from our class of travelling wave models is that we can construct seemingly complicated second order ODEs, which all share simple analytical travelling wave solutions with the ubiquitous Fisher-KKP model. From the point of view of mechanistic modelling where models are often constructed based on physical assumptions that can be hard to validate, this work illustrates how models instead can be constructed based on a mathematical principle in the form of Lie symmetries. 

\section{Acknowledgements}
JGB would like to thank the Wenner--Gren Foundations for a research fellowship, and Linacre College of the University of Oxford for a Junior Research Fellowship. JGB and XZ would like to thank the London Mathematical Society for the Undergraduate Research Bursary with grant number URB-2023-46 which funded an 8-week summer project. 


\setcounter{section}{0}

\clearpage

\noindent \textit{\Large Construction of integrable generalised travelling wave models and analytical solutions using Lie symmetries\\[0.5cm] Supplementary material\\}\normalsize\\[2mm]
\author{Johannes Borgqvist$^{*\diamond}$, Fredrik Ohlsson$^{**}$, Xingjian Zhou$^{*}$, Ruth E. Baker$^{*}$}
\date{\today}
\maketitle
\begin{itemize}
\item[$*$] Wolfson Centre for Mathematical Biology, Mathematical Institute, University of Oxford,
\item[$**$] Department of Mathematics and Mathematical Statistics, Ume\aa$\,$University.
\item[$\diamond$] Corresponding author, e-mail: \url{johannes.borgqvist@gmail.com}.  
\end{itemize}
\rule{\textwidth}{1pt}

\pagenumbering{arabic}

\section{Model construction using differential invariants}

We consider the infinitesimal generator of the generalised symmetry $X_{G}$ in Eq. \eqref{eq:generalised_generator}. Given this generator, the infinitesimals are given by
\begin{align}
\xi{(z,u)}&=- K\exp\left(pz\right)\,,\label{eq:xi_generalised}\\
\eta{(z,u)}&=u \exp\left(pz\right)\,.\label{eq:eta_generalised}
\end{align}
The corresponding prolonged infinitesimals are given by
\begin{align}
\eta^{(1)}{(z,u,u_{z})}&=\left(p u + u_{z} \left(Kp + 1\right)\right) \exp\left(pz\right)\,,\label{eq:eta_1_general}\\
\eta^{(2)}{(z,u,u_{z},u_{zz})}&=\left(p^{2} u + p u_{z} \left(Kp + 2\right) + u_{zz} \left(2 Kp + 1\right)\right) \exp\left(pz\right)\,.\label{eq:eta_2_general}
\end{align}
Given these infinitesimals, we calculate the differential invariants. The zeroth order invariant is a \textit{first integral} of
\begin{equation}
  \dfrac{\mathrm{d}u}{\mathrm{d}z}=-\frac{1}{K}u\,,
  \label{eq:r_eq}
\end{equation}
which can be formulated as
\begin{equation}
  r{\left(z,u\right)}=u\exp\left(\frac{1}{K}z\right)\,.
  \label{eq:r_sol}
\end{equation}
The first order invariant is a \textit{first integral} of:
\begin{equation}
  \dfrac{\mathrm{d}u_{z}}{\mathrm{d}u}=p+(1+Kp)\frac{u_{z}}{u}\,,
  \label{eq:v_eq}
\end{equation}
which can be formulated as follows:
\begin{equation} v{\left(z,u,u_{z}\right)}=\frac{1}{u^{Kp}}\left(\frac{u_{z}}{u}+\frac{1}{K}\right)\,.
  \label{eq:v_sol}
\end{equation}
The second order invariant is a \textit{first integral} of 
\begin{equation} \dfrac{\mathrm{d}u_{zz}}{\mathrm{d}u}=p^{2}+p(2+Kp)\frac{u_{z}}{u}+(2Kp+1)\frac{u_{zz}}{u}\,.
  \label{eq:w_eq}
\end{equation}
 From Eq. \eqref{eq:v_sol}, it follows that
\begin{equation}
  \frac{u_{z}}{u}=vu^{Kp}-\frac{1}{K}\,,
  \label{eq:v_subs}
\end{equation}
where $v=v{\left(z,u,u_{z}\right)}$ is the first order invariant. Using Eq. \eqref{eq:v_subs}, we re-write Eq. \eqref{eq:w_eq} as
\begin{equation}
  \dfrac{\mathrm{d}u_{zz}}{\mathrm{d}u}=-\frac{2p^{2}}{K}+pv(2+Kp)u^{Kp}+(2Kp+1)\frac{u_{zz}}{u}\,,
  \label{eq:w_eq_new}
\end{equation}
which gives us the following \textit{first integral}:
\begin{equation}  w{\left(z,u,u_{z},u_{zz}\right)}=\frac{1}{u^{(2Kp+1)}}\left(u_{zz}+\frac{(2+Kp)}{K}u_{z}+\frac{(1+Kp)}{K^{2}}u\right)\,.
  \label{eq:w_sol}
\end{equation}
Importantly, we note that for the Lie Algebra $\mathcal{L}=\mathrm{Span}\{X_{z},X_{G}\}$, the general class is independent of $z$, and thus the class is given by $w=F_{zG}(v)$ for an arbitrary function $F_{zG}\in\mathcal{C}(\mathbb{R})$.

\section{Canonical coordinates of the infinitesimal generators}
Next, we aim at finding first integrals or analytical solutions of the class of models in Theorem 1. To this end, we must carry out two successive step-wise integrations. Consequently, we define two sets of canonical coordinates: one set for the generator of the ideal $\mathcal{L}_{G}=\mathrm{Span}\{X_{G}\}$ and another set for the generator $X_z$ reduced to the invariants of $X_G$.

\subsection{Canonical coordinates for the generalised generator}
Importantly, the first order invariant $v$ in Eq. \eqref{eq:v_sol} in terms of the canonical coordinates $(s_{2},r_{2})$ in Eq. \eqref{eq:canonical_coordinates_2} is given by
  \begin{equation} v=-\left(\frac{1}{Kr_{2}^{Kp+1}}\right)\frac{\mathrm{d}r_{2}}{\mathrm{d}s_{2}}\,,
    \label{eq:v_canonical}
  \end{equation}
and similarly the second order invariant $w$ in Eq. \eqref{eq:w_sol} is given by
\begin{equation}
    w =\frac{1}{K^{2}r_{2}^{(2Kp+1)}}\frac{\mathrm{d}^{2}r_{2}}{\mathrm{d}s_{2}^{2}}\,.
    \label{eq:w_canonical}
\end{equation}
Importantly, we combine Eqs. \eqref{eq:v_canonical} and \eqref{eq:w_canonical} in order to obtain an equation relating $w$ to $v$:
  \begin{equation} w=(Kp+1)v^{2}-\frac{1}{Kr_{2}^{Kp}}\frac{\mathrm{d}v}{\mathrm{d}s_{2}}\,.
    \label{eq:w_and_v}
   \end{equation}
For all the details behind these calculations, see Section \ref{ssec:detailed_sub} of the supplementary material. Subsequently, we calculate the canonical coordinates of the full Lie algebra. 
\subsection{Canonical coordinates for the reduction of the translation generator}
To calculate the canonical coordinates for the reduction of $X_z$, we begin by considering the zeroth and first order invariants of $X_{G}$ given by $r_{2}$ in Eq. \eqref{eq:canonical_coordinates_2} and $v$ in Eq.~\eqref{eq:v_sol}, respectively. We note that $X_{z}^{(1)}=X_{z}$ and the action of this generator on the two previously mentioned differential invariants is
\begin{align}
  X_{z}r_{2}&=\frac{1}{K}r_{2}\,,\label{eq:X_z_r}\\
  X_{z}^{(1)}v&=0\,.\label{eq:X_z_v}
\end{align}
Therefore, the generator $X_{z}$ restricted to the coordinates $(r_{2},v)$ denoted by $\tilde{X_{z}}$ is given by
\begin{equation} \tilde{X_{z}}=\left.X_{z}\right|_{(r_{2},v)}=\left(\frac{1}{K}\right)r_{2}\partial_{r_{2}}\,.
  \label{eq:X_tilde}
\end{equation}
Based on this generator, the canonical coordinates for the full Lie algebra $\mathcal{L}=\mathrm{Span}\{X_{z},X_{G}\}$ are given by
\begin{equation}
  s_{1}(r_{2},v)=K\ln(r_{2})\,,\quad r_{1}(r_{2},v)=v\,.
  \label{eq:canonical_full}
\end{equation}
Using these canonical coordinates, we conduct a coordinate change from $(s_{2},r_{2})$ to $(s_{1},r_{1})$ in Eq. \eqref{eq:w_and_v} which yields
\begin{equation}
w(s_{1},r_{1})=(Kp+1)r_{1}^{2}+Kr_{1}\frac{\mathrm{d}r_{1}}{\mathrm{d}s_{1}}\,.
  \label{eq:w_ODE}
\end{equation}
The details behind these calculations are presented in Section \ref{ssec:detailed_full}.

\section{Proof of Theorem 1: deriving analytical solutions}
The class of models under consideration is given by $w=F_{zG}(r_{1})$ where $w$ is given by Eq. \eqref{eq:w_ODE}. This can equivalently be formulated as the following first order ODE:
\begin{equation}
K r_{1}\frac{\mathrm{d}r_{1}}{\mathrm{d}s_{1}}=F_{zG}(r_{1})-(Kp+1)r_{1}^{2}\,.
  \label{eq:ODE_step_1_ver_2}
\end{equation}
We seek solutions of Eq. \eqref{eq:ODE_step_1_ver_2} such that the canonical coordinate $r_{1}$ is constant, i.e. $r_{1}=\omega$ for some constant $\omega\neq 0$. By substituting this ansatz into Eq. \eqref{eq:ODE_step_1_ver_2}, we obtain Eq. \eqref{eq:param_class}, and thus the class of models defined by functions $F_{zG}(\omega)$ such that Eq. \eqref{eq:param_class} has at least one root $\omega\in\mathbb{R}$ has solutions of the form $r_{1}=\omega$.

In terms of the canonical coordinates $(s_{2},r_{2})$, the equation $r_{1}=\omega$ corresponds to the following separable autonomous first order ODE
\begin{equation}
\frac{\mathrm{d}s_{2}}{\mathrm{d}r_{2}}=-\frac{1}{K\omega}\frac{1}{r_{2}^{Kp+1}}\,,
    \label{eq:ODE_canonical}
\end{equation}
which gives
\begin{equation}
r_{2}^{Kp}=\left(\frac{1}{K\omega(Kp)}\right)\left(\frac{1}{s_{2}+\tilde{\Upsilon}}\right)\,,
    \label{eq:solution_1}
\end{equation}
 for some integration constant $\tilde{\Upsilon}$. This equation yields
\begin{equation}
r_{2}=\left(\frac{1}{(K\omega)^{1/Kp}(Kp)^{1/Kp}}\right)\left(\frac{1}{(s_{2}+\tilde{\Upsilon})^{1/Kp}}\right)\,.
    \label{eq:solution_2}
\end{equation}
Transforming from $(s_{2},r_{2})$ to the original coordinates $(z,u)$ using the transformations in Eq. \eqref{eq:canonical_coordinates_2}, and then solving the resulting equation for $u$ yields the analytical solution in Eq. \eqref{eq:analytical_solution} where $\Upsilon=\tilde{\Upsilon}/(Kp)$ is an arbitrary integration constant.

\section{Action of the symmetries on analytical solutions}
Here, we generate the symmetries corresponding to the Lie algebra $\mathcal{L}=\mathrm{Span}\{X_{z},X_{G}\}$. Thereafter, we quantify the action of these symmetries on the analytical solutions in Eq. \eqref{eq:analytical_solution}.

\subsection{Generate symmetries from the infinitesimal generators of the Lie group}
Let $\Gamma_{\epsilon}:(z,u)\mapsto(\hat{z}(\epsilon))$ be a (one parameter pointwise) Lie symmetry that is generated by the vector field $X=\xi(z,u)\partial_{z}+\eta(z,u)\partial_u$. Then, the target functions $\hat{t}(\epsilon)$ and $\hat{u}(\epsilon)$ solve the following system of ODEs:
\begin{equation}
  \begin{split}
    \frac{\mathrm{d}\hat{z}}{\mathrm{d}\epsilon}=\xi(\hat{z},\hat{u})\,,\quad\hat{z}(0)=z\,,\\
    \frac{\mathrm{d}\hat{u}}{\mathrm{d}\epsilon}=\eta(\hat{z},\hat{u})\,,\quad\hat{u}(0)=u\,.\\
  \end{split}
  \label{eq:general_ODE_sys}
\end{equation}
Starting with the translation generator $X=\partial_{z}$, it is clear that the corresponding symmetry is given by $\Gamma_{\epsilon}^{z}:(z,u)\mapsto(z+\epsilon,u)$.

For the generator $X_{G}=-K\exp\left(pz\right)\partial_{z}+\exp\left(pz\right)u\partial_{u}$, the target functions solve the following system of ODEs:
\begin{align}
  \frac{\mathrm{d}\hat{z}}{\mathrm{d}\epsilon}&=-K\exp\left(p\hat{z}\right),\quad\hat{z}(0)=z\,,\label{eq:z_hat}\\
  \frac{\mathrm{d}\hat{u}}{\mathrm{d}\epsilon}&=\exp\left(p\hat{z}\right)\hat{u},\quad\hat{u}(0)=u\,.\label{eq:u_hat}
\end{align}
The solution of Eq. \eqref{eq:z_hat} is given by
\begin{align*}
  \frac{\mathrm{d}\hat{z}}{\mathrm{d}\epsilon}&=-K\exp\left(p\hat{z}\right)\\
  \Longrightarrow \int^{\hat{z}}_{z}\exp(-p\hat{z})\mathrm{d}\hat{z}&=-K\int^{\epsilon}_{0}\mathrm{d}\epsilon
  \\
  \Longrightarrow \exp(-p\hat{z})=\exp(-pz)+pK\epsilon&=\frac{pK\epsilon\exp(pz)+1}{\exp(pz)}\\
  \Longrightarrow \exp(p\hat{z})&=\frac{\exp(pz)}{1+pK\epsilon\exp(pz)}\\
  \Longrightarrow \hat{z}&=\frac{1}{p}\ln\left(\frac{\exp(pz)}{1+pK\epsilon\exp(pz)}\right)\,,\\
\end{align*}
which is summarised as follows:
\begin{equation}
\hat{z}(\epsilon)=\frac{1}{p}\ln\left(\frac{\exp(pz)}{1+pK\epsilon\exp(pz)}\right)\,.
  \label{eq:z_trans_2}
\end{equation}
For the second ODE in Eq. \eqref{eq:u_hat}, we note that
\begin{equation}
  \exp(p\hat{z})=\frac{\exp(pz)}{1+pK\epsilon\exp(pz)}\,,
  \label{eq:help_equation}
\end{equation}
which gives
\begin{align*}
  \frac{\mathrm{d}\hat{u}}{\mathrm{d}\epsilon}&=\exp\left(p\hat{z}\right)\hat{u}\\
  \Longrightarrow \int^{\hat{u}}_{u}\frac{1}{\hat{u}}\,\mathrm{d}\hat{u}&=\exp(pz)\int^{\epsilon}_{0}\frac{1}{1+pK\epsilon\exp(pz)}\,\mathrm{d}\epsilon\\
  \Longrightarrow\left[\ln(\hat{u})\right]^{\hat{u}}_{u}&=\frac{1}{pK}[\ln(1+pK\exp(pz)\epsilon)]^{\epsilon}_{0}\\
  \Longrightarrow \ln\left(\frac{\hat{u}}{u}\right)&=\frac{1}{pK}\ln(1+pK\exp(pz)\epsilon)=\ln(\left(1+pK\exp(pz)\epsilon\right)^{1/pK})\\
   \Longrightarrow \hat{u}&=u\left(1+pK\exp(pz)\epsilon\right)^{1/pK}\,,                 
\end{align*}
which is summarised as follows
\begin{equation}
  \hat{u}(\epsilon)=u\left(1+pK\exp(pz)\epsilon\right)^{1/pK}\,.\\  
  \label{eq:u_trans_2}
\end{equation}

\subsection{Action of symmetries on analytical solutions}
We derive the action of the symmetries $\Gamma^{z}_{\epsilon}$ and $\Gamma^{G}_{\epsilon}$, respectively, on the analytical solutions in Eq. \eqref{eq:analytical_solution}. Starting with the translation symmetry, the target functions are given by $\hat{t}(\epsilon)=t+\epsilon$ and $\hat{u}(\epsilon)=u$. The transformed analytical solution curve is therefore described by
\begin{equation}
\hat{u}(\epsilon)=\left(\frac{1}{(K\omega)^{1/Kp}}\right)\left(\frac{1}{\left(1+\hat{\Upsilon}\exp(p\hat{z}(\epsilon))\right)^{1/Kp}}\right)\,,
  \label{eq:u_hat_trans}
\end{equation}
where $\hat{\Upsilon}$ is the parameter defining this transformed solution curve. Now, we wish to derive an equation for $\hat{\Upsilon}$ in terms of $\Upsilon$ and $\epsilon$. By substituting the target functions into Eq. \eqref{eq:u_hat_trans}, we have
\begin{equation}
u=\left(\frac{1}{(K\omega)^{1/Kp}}\right)\left(\frac{1}{\left(1+[\hat{\Upsilon}\exp(p\epsilon)]\exp(pz(\epsilon))\right)^{1/Kp}}\right)\,,
  \label{eq:u_hat_trans_2}
\end{equation}
and by comparing Eqs. \eqref{eq:analytical_solution} and \eqref{eq:u_hat_trans_2} it follows that
\begin{equation}
\Upsilon=\hat{\Upsilon}\exp\left(p\epsilon\right)\,,
  \label{eq:Fisher_param_first}
\end{equation}
which gives
\begin{equation}
\hat{\Upsilon}=\Upsilon\exp\left(-p\epsilon\right)\,.
  \label{eq:Fisher_param_final}
\end{equation}
Moving on to the generalised symmetry $\Gamma_{\epsilon}^{G}$, by substituting the target functions $\hat{z}$ in Eq. \eqref{eq:help_equation} and $\hat{u}$ in Eq. \eqref{eq:u_trans_2}, respectively, into Eq. \eqref{eq:u_hat_trans}, we obtain
\begin{align*}
  u\left(1+pK\exp(pz)\epsilon\right)^{1/pK}&=\left(\frac{1}{(K\omega)^{1/Kp}}\right)\dfrac{1}{\left(1+\hat{\Upsilon}\left(\frac{\exp(pz)}{1+pK\epsilon\exp(pz)}\right)\right)^{1/pK}}\\
  \Longrightarrow u&= \left(\frac{1}{(K\omega)^{1/Kp}}\right)\left(\frac{1}{\left(1+[\hat{\Upsilon}+pK\epsilon]\exp(pz)\right)^{1/Kp}}\right)\,, 
\end{align*}
which is summarised as follows
\begin{equation}
u = \left(\frac{1}{(K\omega)^{1/Kp}}\right)\left(\frac{1}{\left(1+[\hat{\Upsilon}+pK\epsilon]\exp(pz)\right)^{1/Kp}}\right)\,.
  \label{eq:u_hat_trans_3}
\end{equation}
By comparing Eqs. \eqref{eq:analytical_solution} and \eqref{eq:u_hat_trans_3}, we obtain
\begin{equation}
\Upsilon=\hat{\Upsilon}+pK\epsilon\,,
  \label{eq:Fisher_param_third}
\end{equation}
which implies that the parameter of the transformed solution curve is defined by
\begin{equation}
\hat{\Upsilon}=\Upsilon-pK\epsilon\,.
  \label{eq:Fisher_param_very_final}
\end{equation}

\begin{figure}[ht!]
  \centering
  \includegraphics[width=\textwidth]{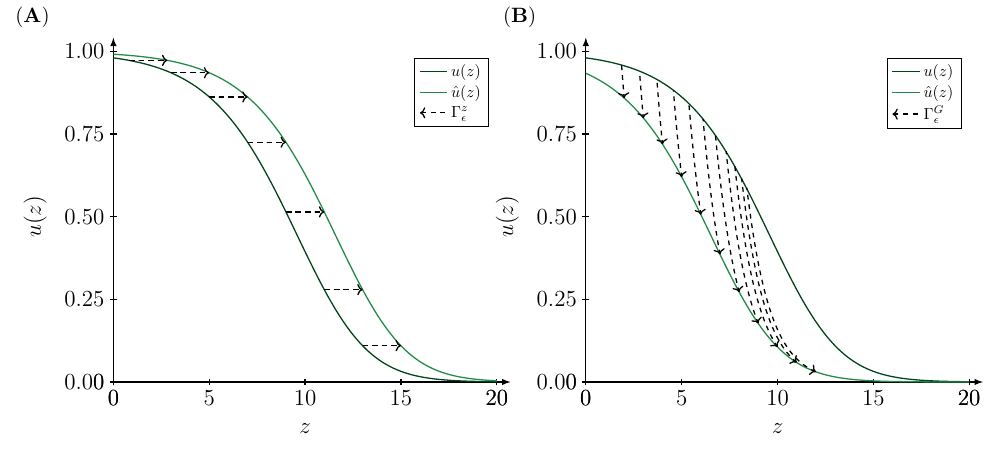}
  \caption{\textit{Action of symmetries on solution curves of the Fisher--KPP model}. The dark green solution curve defined by $\Upsilon=0.01$ of the Fisher--KPP model given by Eq. \eqref{eq:analytical_solution} is transformed to another solution curve in light green where the transformation parameter of the symmetry is denoted by $\epsilon$. This is illustrated in two cases: (\textbf{A}) the action of the translation symmetry $\Gamma_{\epsilon}^{z}$ in Eq. \eqref{eq:translation_symmetry} with a transformation parameter of $\epsilon=2.00$, and (\textbf{B}) the action of the generalised symmetry $\Gamma_{\epsilon}^{G}$ in Eq. \eqref{eq:generalised_symmetry} with a transformation parameter of $\epsilon=0.05$. }
\label{fig:Fisher}
\end{figure}

In summary, the infinitesimal generator of the Lie group $X_{z}=\partial_z$ generates the symmetry $\Gamma_{\epsilon}^{z}$ given by
\begin{equation}
  \Gamma_{\epsilon}^{z}:(z,u)\mapsto(z+\epsilon,u)\,,
  \label{eq:translation_symmetry}
\end{equation}
corresponding to translations in $z$. Moreover, the vector field $X_{G}$ in Eq. \eqref{eq:generalised_generator} generates the symmetry $\Gamma_{\epsilon}^{G}$ given by Eq. \eqref{eq:generalised_symmetry}.

To illustrate the transformations of solution curves by these symmetries, we consider the Fisher--KPP model defined by
\begin{equation}
    u_{zz}+\frac{5}{\sqrt{6}}u_{z}+u(1-u)=0\,,
    \label{eq:Fisher_KPP}
\end{equation}
 which is given by the parameters
\begin{equation}
  p=\frac{1}{\sqrt{6}}\,,\quad K=\sqrt{\frac{3}{2}}\,.
  \label{eq:Fisher_param}
\end{equation}
Based on this example, the action of the symmetries $\Gamma_{\epsilon}^{z}$ and $\Gamma_{\epsilon}^{G}$ in Eqs. \eqref{eq:translation_symmetry} and \eqref{eq:generalised_symmetry}, respectively, on the analytical solutions in Eq. \eqref{eq:analytical_solution} is illustrated in Figure \ref{fig:Fisher}.

\section{Detailed calculations of the canonical coordinates}

Here, we calculate the canonical coordinates of the two generators of the Lie algebra $\mathcal{L}=\mathrm{Span}\{X_{z},X_{G}\}$.

\subsection{Canonical coordinates of the generalised generator}\label{ssec:detailed_sub}

We consider the following differential invariants:
  \begin{align}    v(u,u_{z})&=\frac{1}{u^{Kp}}\left(\frac{u_{z}}{u}+\frac{1}{K}\right)\,,\label{eq:v}\\ w(u,u_{z},u_{zz})&=\frac{1}{u^{(2Kp+1)}}\left(u_{zz}+\left(\frac{2+Kp}{K}\right)u_{z}+\left(\frac{1+Kp}{K^{2}}\right)u\right)\,,\label{eq:w}    
  \end{align}
together with the following infinitesimal generator of the Lie group $X_{G}$ in Eq. \eqref{eq:generalised_generator}. In other words, the infinitesimals are given by $\xi{(z)}=-K\exp\left(pz\right)$ and $\eta{(z,u)}=\exp\left(pz\right)u$. The canonical coordinate $r_{2}{(z,u)}$ is given by the zeroth first integral in Eq. \eqref{eq:r_sol}. The canonical coordinate $s_{2}(z)$ is defined by $s_{2}(z)=\int\xi(z)^{-1}\mathrm{d}z$ and is given by
   \begin{equation}
    s_{2}{(z)}=\frac{1}{Kp}\exp(-pz)\,.
    \label{eq:s_sol}
\end{equation}
By combining Eqs. \eqref{eq:r_sol} and \eqref{eq:s_sol}, we derive the following two non-linear coordinate transformations for $z$ and $u$ in terms of these canonical coordinates
\begin{align}
    z{(s_{2})}&=-\frac{1}{p}\ln(Kps_{2})\,,\label{eq:z_trans}\\
    u{(s_{2})}&=r_{2}(s_{2})(Kps_{2})^{1/Kp}\,.\label{eq:u_trans}
\end{align}
By differentiating Eq. \eqref{eq:z_trans} with respect to $s_{2}$, we obtain
\begin{equation} \frac{\mathrm{d}z}{\mathrm{d}s_{2}}=-\frac{1}{ps_{2}}\Leftrightarrow\frac{\mathrm{d}s_{2}}{\mathrm{d}z}=-ps_{2}=-\left(\frac{1}{K}\right)(pKs_{2})\,,
    \label{eq:dzds}
\end{equation}
and by differentiating Eq. \eqref{eq:u_trans} with respect to $s_{2}$, we obtain
\begin{equation} \frac{\mathrm{d}u}{\mathrm{d}s_{2}}=r_{2}'(s_{2})(Kps_{2})^{1/Kp}+r_{2}(s_{2})(Kps_{2})^{(1/Kp)-1}\,.
    \label{eq:duds}
\end{equation}
Since $u_{z}=\mathrm{d}u/\mathrm{d}z=(\mathrm{d}u/\mathrm{d}s_{2})(\mathrm{d}s_{2}/\mathrm{d}z)$, we have
\begin{equation} u_{z}=-\left(\frac{1}{K}\right)\left(r_{2}'(s_{2})(Kps_{2})^{(1/Kp)+1}+r_{2}(s_{2})(Kps_{2})^{1/Kp}\right)\,.
    \label{eq:uz}
\end{equation}
Differentiating $u_{z}$ with respect to $s_{2}$, yields
\begin{equation}
    \frac{\mathrm{d}u_{z}}{\mathrm{d}s_{2}}=-\left(\frac{1}{K}\right)\left(r_{2}''(s_{2})(Kps_{2})^{(1/Kp)+1}+(2+Kp)r_{2}'(s_{2})(Kps_{2})^{1/Kp}+r_{2}(s_{2})(Kps_{2})^{(1/Kp)-1}\right)\,,
    \label{eq:duzds}
\end{equation}
and since $u_{zz}=\mathrm{d}u_{z}/\mathrm{d}z=(\mathrm{d}u_{z}/\mathrm{d}s_{2})(\mathrm{d}s_{2}/\mathrm{d}z)$, we have
\begin{equation} u_{zz}=\frac{1}{K^{2}}\left(r_{2}''(s_{2})(Kps_{2})^{(1/Kp)+2}+(2+Kp)r_{2}'(s_{2})(Kps_{2})^{(1/Kp)+1}+r_{2}(s_{2})(Kps_{2})^{1/Kp}\right)\,.
    \label{eq:uzz}    
  \end{equation}
  Using these expressions for the derivatives in terms of the canonical coordinates, we can express the differential invariants in terms of the canonical coordinates. Beginning with $v$ in Eq. \eqref{eq:v}, substituting Eqs. \eqref{eq:u_trans} and \eqref{eq:uz} into this equation yields
  \begin{equation}
  v=-\left(\frac{1}{Kr_{2}^{Kp+1}}\right)\frac{\mathrm{d}r_{2}}{\mathrm{d}s_{2}}\,.
    \label{eq:v_canonical_2}
  \end{equation}
Next, from Eqs. \eqref{eq:u_trans}, \eqref{eq:uz} and \eqref{eq:uzz} we have
\begin{equation} u_{zz}+\left(\frac{2+Kp}{K}\right)u_{z}+\left(\frac{1+Kp}{K^{2}}\right)u=\frac{1}{K^{2}}r_{2}''(s_{2})(Kps_{2})^{(1/Kp)+2}\,.
    \label{eq:LHS}
\end{equation}
Also, from Eq. \eqref{eq:u_trans} it follows that
\begin{equation}
    u^{(2Kp+1)}=r_{2}(s_{2})^{(2Kp+1)}(Kps_{2})^{(1/Kp)+2}\,,
    \label{eq:RHS}
\end{equation}
and dividing Eq. \eqref{eq:LHS} by Eq. \eqref{eq:RHS} yields
\begin{equation}
    w = \frac{1}{K^{2}r_{2}(s_{2})^{(2Kp+1)}}r_{2}''(s_{2})\,,
    \label{eq:w_canonical_2}
\end{equation}
according to Eq. \eqref{eq:w}.

Lastly, we calculate an equation linking $v$ in Eq. \eqref{eq:v_canonical_2} to $w$ in Eq. \eqref{eq:w_canonical_2}. From Eq. \eqref{eq:v_canonical_2}, we have that
  \begin{equation}
    \frac{\mathrm{d}r_{2}}{\mathrm{d}s_{2}}=-vKr_{2}^{Kp+1}\,,
    \label{eq:v_canonical_1}
  \end{equation}
and thus
  \begin{equation}  \frac{\mathrm{d}^{2}r_{2}}{\mathrm{d}s_{2}^{2}}=-\frac{\mathrm{d}v}{\mathrm{d}s_{2}}Kr_{2}^{Kp+1}-vK(Kp+1)r_{2}^{Kp}\frac{\mathrm{d}r_{2}}{\mathrm{d}s_{2}}\,.
    \label{eq:v_canonical_3}
  \end{equation}
 Moreover, Eq. \eqref{eq:w_canonical_2} yields
\begin{equation}
 \frac{\mathrm{d}^{2}r_{2}}{\mathrm{d}s_{2}^{2}}=K^{2}r_{2}^{(2Kp+1)}w \,,
    \label{eq:w_canonical_1}
  \end{equation}
  and by equating Eqs. \eqref{eq:v_canonical_3} and \eqref{eq:w_canonical_1} and then solving for $w$ results in
  \begin{equation} 
  w=(Kp+1)v^{2}-\frac{1}{Kr_{2}^{Kp}}\frac{\mathrm{d}v}{\mathrm{d}s_{2}}\,.
    \label{eq:w_and_v_2}
   \end{equation}

   \subsection{Canonical coordinates of the translation generator}\label{ssec:detailed_full}
   Consider the generator $X_{z}$ restricted to the coordinates $(r_{2},v)$ denoted by $\tilde{X_{z}}$ in Eq. \eqref{eq:X_tilde}. For this generator, we have the following two infinitesimals:
\begin{equation}
  \tilde{\xi}(r_{2},v)=\frac{1}{K}r_{2}\,,\quad \tilde{\eta}(r_{2},v)=0\,.
  \label{eq:infinitesimals_tilde}
\end{equation}
The canonical coordinate $s_{1}$ is given by
\begin{equation*}
  s_{1}=K\int\frac{1}{r_{2}}\mathrm{d}r_{2}=K\ln(r_{2})\,.
\end{equation*}
The canonical coordinate $r_{1}$ is a first integral of
\begin{equation}
\frac{\mathrm{d}v}{\mathrm{d}r_{2}}=0\,,
\end{equation}
and hence $r_{1}=v$. In summary, the canonical coordinates for the full Lie algebra are given by Eq. \eqref{eq:canonical_full}. Next, we want express Eq.~\eqref{eq:w_and_v} in terms of the canonical coordinates $(s_{1},r_{1})$ of the reduced generator $\tilde{X_z}$ instead of the canonical coordinates $(s_{2},r_{2})$ of $X_{G}$. To this end, we need to change coordinates of the term
\begin{equation}
  \frac{1}{Kr_{2}^{Kp}}\frac{\mathrm{d}v}{\mathrm{d}s_{2}}=\frac{1}{Kr_{2}^{Kp}}\frac{\mathrm{d}r_{1}}{\mathrm{d}s_{2}}\,.
  \label{eq:lonely_term}
\end{equation}
To this end, we have that
\begin{equation}
  \frac{\mathrm{d}r_{1}}{\mathrm{d}s_{2}}=\frac{\mathrm{d}r_{1}}{\mathrm{d}s_{1}}\frac{\mathrm{d}s_{1}}{\mathrm{d}s_{2}}=\frac{\mathrm{d}r_{1}}{\mathrm{d}s_{1}}\left(\dfrac{\frac{\mathrm{d}s_{1}}{\mathrm{d}z}}{\frac{\mathrm{d}s_{2}}{\mathrm{d}z}}\right)\,.
  \label{eq:derivative}
\end{equation}
Moreover, it follows that:
\begin{equation}
  \mathrm{d}s_{1}=\frac{\mathrm{d}s_{1}}{\mathrm{d}z}=K\left(\frac{u_{z}}{u}+\frac{1}{K}\right)\,,
  \label{eq:ds1}
\end{equation}
and
\begin{equation}
  \mathrm{d}s_{2}=\frac{\mathrm{d}s_{2}}{\mathrm{d}z}=-\frac{1}{K}\exp(-pz)\,.
  \label{eq:ds2}
\end{equation}
Now, substituting Eqs. \eqref{eq:ds1} and \eqref{eq:ds2} into Eq. \eqref{eq:derivative} and then substituting the resulting expression into Eq. \eqref{eq:lonely_term} results in the following calculations
\begin{align*}
  \frac{1}{Kr_{2}^{Kp}}\frac{\mathrm{d}r_{1}}{\mathrm{d}s_{2}}&=-\frac{1}{K}\frac{\mathrm{d}r_{1}}{\mathrm{d}s_{1}}K\underset{=r_{1}}{\underbrace{\left[\frac{1}{u^{Kp}}\left(\frac{u_{z}}{u}+\frac{1}{K}\right)\right]}}K\underset{=1}{\underbrace{\left(\dfrac{\exp(pz)}{\exp\left(\frac{z}{K}\right)^{Kp}}\right)}}\\
&=-Kr_{1}\frac{\mathrm{d}r_{1}}{\mathrm{d}s_{1}}\,.
\end{align*}
In summary, Eq. \eqref{eq:lonely_term} can be written as follows:
\begin{equation}
\frac{1}{Kr_{2}^{Kp}}\frac{\mathrm{d}r_{1}}{\mathrm{d}s_{2}}=-Kr_{1}\frac{\mathrm{d}r_{1}}{\mathrm{d}s_{1}}\,,
  \label{eq:lonely_term_updated}
\end{equation}
which implies that the equation for $w$ in terms of the canonical coordinates $(s_{1},r_{1})$ is given by Eq. \eqref{eq:w_ODE}.

\end{document}